\newcommand{\be}{\begin{equation}}
\newcommand{\ee}{\end{equation}}
\newcommand{\Expected}{\mathrm{E}}
\newcommand{\nn}{\nonumber}
\newcommand{\beliefPi}{\hat{\pi}}
\newcommand{\beliefQ}{\hat{Q}}
\newcommand{\beliefV}{\hat{v}}
\DeclareMathOperator*{\argmax}{argmax}
\newcounter{definition}
\newenvironment{definition}{\refstepcounter{definition}\par\medskip
   \noindent 
   \textbf{Definition \thedefinition.} \em \rmfamily}{\medskip}
\newcounter{assumption}
\newenvironment{assumption}{\refstepcounter{assumption}\par\medskip
   \noindent  
   \textbf{Assumption \theassumption.} \em \rmfamily}
{\medskip}
\newcounter{theorem}
\newenvironment{theorem}{\refstepcounter{theorem}\par\medskip
   \noindent  
   \textbf{Theorem \thetheorem.} \em \rmfamily}
{\medskip}
\newenvironment{proof}{
   \noindent \textit{Proof:} \rmfamily}{\hfill $\square$\medskip}
\begin{document}

\date{}

\title{\LARGE \bf
On the Global Convergence of Stochastic Fictitious Play \\ in Stochastic Games with Turn-based Controllers\thanks{This is a corrected version of \citep{ref:Sayin22CDC}.}
}

\author{Muhammed O. Sayin
\thanks{M. O. Sayin is with Department of Electrical and Electronics Engineering, Bilkent University, Ankara T\"{u}rkiye.
        {\tt\small sayin@ee.bilkent.edu.tr}}%
}
\maketitle

\bigskip

\begin{center}
\textbf{Abstract}
\end{center}
This paper presents a learning dynamic with almost sure convergence guarantee for any stochastic game with turn-based controllers (on state transitions) as long as stage-payoffs induce a zero-sum or identical-interest game. Stage-payoffs for different states can even have different structures, e.g., by summing to zero in some states and being identical in others. The dynamics presented combines the classical stochastic fictitious play with value iteration for stochastic games. There are two key properties: (i) players play finite horizon stochastic games with increasing lengths within the underlying infinite-horizon stochastic game, and (ii) the turn-based controllers ensure that the auxiliary stage-games (induced from the continuation payoff estimated) are strategically equivalent to zero-sum or identical-interest games.

\begin{spacing}{1.245}

\section{Introduction}
Stochastic games, introduced by \cite{ref:Shapley53}, can model interactions among multiple intelligent (non-cooperative) decision-makers with long-term (possibly different) objectives in dynamic multi-state environments with Markovian state transitions and stage-payoffs. This makes stochastic games an ideal model for multi-agent reinforcement learning with non-cooperative agents. 

Characterization of equilibrium for stochastic games has already been studied extensively \citep{ref:Neyman03}. Many methods have been proposed to compute an equilibrium for stochastic games, including the inaugural work of \cite{ref:Shapley53} as a generalization of value iteration for Markov decision processes to two-player zero-sum stochastic games and its model-free version Minimax-Q algorithm \citep{ref:Littman94} and so on.

An important justification for the predictive power of game-theoretical equilibrium analysis is that an equilibrium can arise naturally as an outcome of non-equilibrium adaption of learning agents \citep{ref:Fudenberg98}. For example, fictitious play (and its variants) is a popular stylized learning model for agents forming beliefs about opponent strategies (as if they are stationary) and choosing greedy responses to these beliefs in the repeated play of the same game \citep{ref:Fudenberg98}. It is well-established that if agents follow stochastic fictitious play dynamics in strategic-form games played repeatedly, the dynamics converges to a \textit{Nash distribution} in important classes of strategic-form games such as games with an interior evolutionary stable strategy (ESS), zero-sum games, potential games, and super-modular games \citep{ref:Hofbauer02}.\footnote{Nash distribution corresponds to Nash equilibrium associated with smoothly perturbed payoffs \citep{ref:Hofbauer05}.} 

However, the existing results on whether non-equilibrium adaptation of learning agents can converge to equilibrium or not in stochastic games has been very limited until recently. A comprehensive answer to this question will strengthen the predictive power of equilibrium analysis and its applications in stochastic games, and therefore, is of particular interest for multi-agent reinforcement learning with frontier applications in artificial intelligence and autonomous systems.

For example, new variants of fictitious play (or best response) dynamics combining classical fictitious play (or best response dynamics) with $Q$-learning (or value iteration) have been shown to converge in certain classes of stochastic games such as zero-sum and identical-payoff \citep{ref:Sayin20,ref:Leslie20,ref:Baudin21}. However, whether these dynamics can converge to equilibrium in general classes of stochastic games is an open problem. Examples include stage-payoffs with structures strategically equivalent to zero-sum or identical payoff, or stage-payoffs with state-dependent structures (e.g., stage-payoffs sum to zero in some states while they are identical in others). Furthermore, even when players have completely opposite stage-payoffs, the underlying stochastic game would not be zero sum if the players have different discount factors.

In this paper, I present a learning dynamic with global convergence guarantees for any general-sum stochastic game in which stage-payoffs induce a zero-sum or identical-interest game and for each state there exists a certain single player controlling the state transitions. Note that turn-based stochastic games in which players take actions in turn (e.g., as in board games) and stochastic games with single-controller in which only a single player controls every state transitions are special cases of stochastic games with turn-based controllers. 

The key property of the dynamics is that players play the underlying stochastic game by dividing the infinite horizon into epochs with finitely many stages. In each epoch, they play a finite horizon version of the underlying game. This provides players opportunities to revise and improve their beliefs formed about the opponent strategies and the underlying game dynamics. In that respect, the dynamics presented has a similar flavor with the near-optimal learning algorithm introduced in \citep{ref:Kearns02} for single-agent reinforcement learning and its generalization to two-player zero-sum stochastic games in \citep{ref:Brafman02}. In another related paper, \cite{ref:Perolat18} focused on actor-critic algorithms in the repeated play of multi-stage games where each state can get visited only once in each repetition. The dynamics presented differs from these results by addressing the cases where stage-payoff functions can have different structures at different states in addition to player-dependent discount factors.

The dynamics presented also has several advantages compared to the recent results on learning in stochastic games such as \citep{ref:Sayin20,ref:Leslie20,ref:Baudin21}. For example, stage-payoffs specific to different states can have different structures (from completely competitive to identical ones) and players can have different discount factors. However, in \citep{ref:Sayin20,ref:Leslie20,ref:Baudin21}, all stage-payoff functions either sum to zero or they are identical, and all players have the same discount factor. Furthermore, the results in \citep{ref:Sayin20,ref:Leslie20,ref:Baudin21} can not be extended to the cases with stage-payoffs that are strategically equivalent to zero-sum or identical payoff cases.

The paper is organized as follows: I describe stochastic games and auxiliary stage-game framework in Section \ref{sec:game} and present the new variant of stochastic fictitious play dynamics for stochastic games in Section \ref{sec:play}. Section \ref{sec:main} includes the convergence result and the proof. I discuss several future research directions in Section \ref{sec:extension} and conclude the paper with some remarks in Section \ref{sec:conclusion}. 

\section{Stochastic Games}\label{sec:game}

Consider a dynamic game with finitely many states and finitely many players. Players play this game over discrete time $k=0,1,\ldots$ by taking actions simultaneously and receiving the corresponding \textit{stage-payoffs} at each stage $k$. Across stages, the state of the game can change causally depending on the history of the play while the players have the long-term objective of maximizing the amount of stage-payoffs received over infinite horizon. Stochastic games (also known as \textit{Markov} games) is such a multi-state dynamic game in which state transitions and stage-payoffs have Markov property by depending only on the \textit{current} state and \textit{current} action profile. 

Formally, an $n$-player infinite-horizon stochastic game can be characterized by a tuple $\langle S, A, r, p, \gamma \rangle 
$ in which $S$ is the set of \textit{finitely} many states, $A$ is a collection of (state-invariant) action sets $A^i$ for each $i\in [n]$ and $s\in S$, i.e., $A= \prod_{i\in [n]} A^i$, and  $r$ is a collection of stage-payoff functions $r^i : S\times A\rightarrow \mathbb{R}$ for each $i\in [n]$.\footnote{The formulation and results can be generalized to the state-variant action sets rather straightforwardly. Furthermore, we define $[n] := \{1,\ldots,n\}$.} For any triple $(s,a,s')$, the transition probability function $p(s'|s,a)$ gives the probability that the game transits to state $s'$ from state $s$ when players play the action profile $a=(a^i)_{i\in [n]} \in A$ at the current state. Set Player $i$ as the typical player. Then, Player $-i$ denotes Player $i$'s opponents, i.e., $-i:=\{j\in[n]:j\neq i\}$. The objective of Player $i$ is to maximize the expected sum of the stage-payoffs discounted with $\gamma^i \in [0,1)$. Note that we let players have different discount factors.

Suppose that players can observe the state and the actions of the opponents with perfect recall. Define the observations that Player $i$ makes up to stage $k$ by $h_k := \{s_0,a_0,\ldots,s_{k-1},a_{k-1},s_k\}$, where $s_l$ and $a_l$ denote, respectively, the state and action profile at stage $l$. We also let players independently randomize their actions according to a behavioral strategy, denoted by $\pi^i=(\pi_k^i\in \Delta(A^i))_{k\geq 0}$, that says which action should be taken with what probability, e.g., based on $h_k$ at stage $k$.\footnote{We denote the simplex of probability distributions over $A^i$ by $\Delta(A^i)$.} Correspondingly, given the strategy profile $\pi=(\pi^i)_{i\in [n]}$, the objective (also known as utility) of Player $i$ is given by
\be\label{eq:utility}
U^i(\pi):=\Expected\left[\sum_{k=0}^{\infty} (\gamma^i)^k r^i(s_k,a_k)\right],
\ee 
where the expectation is taken with respect to the randomness on $(s_k,a_k)$ for each $k\geq 0$. 

Note that \cite{ref:Shapley53} (and later \cite{ref:Fink64}) showed that in two-player zero-sum (and $n$-player general-sum) stochastic games, there always exists a \textit{stationary} mixed-strategy equilibrium in which players' strategies are stationary by depending only on the current state (and not depending on time). However, here we do not restrict the strategies of the players to the stationary ones.

\begin{definition}\label{def:turnbased}
We say that a stochastic game has \textit{turn-based controller} if for each state $s$ there exists a single player $i_s$ such that the state transition probability $p(s'|s,a)$ for any $s'$ depends only on $(s,a^{i_s})$, i.e., does not depend on other players' actions $a^{-i_s}$.
\end{definition}

\subsection{Auxiliary Stage-Game Framework}
If there were only one state and the discount factor $\gamma = 0$, then stochastic games would reduce to strategic-form (also known as normal-form or one-shot) games. 
Hence, we can view the stage-wise interactions among players at stage $k$ as a strategic-form \textit{auxiliary stage-game} specific to the current state $s$ and stage $k$. We call payoff functions of these auxiliary stage-games by \textit{$Q$-function}, denoted by $Q_k^i:S\times A \rightarrow \mathbb{R}$ for all $k$. In other words, the auxiliary stage-game of stage $k$ specific to state $s$ can be characterized by the pair $G_k(s):=\langle A, Q_k(s,\cdot)\rangle$. 

The $Q$-function $Q_k^i(s,a)$ gives the expected utility of Player $i$ associated with $(s,a)$ at stage $k$, given by\footnote{This is indeed a scaled version of the payoff function by $\gamma^{-k}$ without loss of generality.}
\begin{align}
Q_k^i(s,a) :&= \Expected\left[\sum_{l=k}^{\infty}(\gamma^i)^{l-k}r^i(s_l,a_l)\; \big|\; (s_k,a_k) = (s,a)\right]\nn\\
&= r^i(s,a) + \gamma^i \sum_{s'\in S}p(s'|s,a) v_{k+1}^i(s'),\quad\forall a,\label{eq:Q}
\end{align}
where we define $v_{k+1}^i:S\rightarrow\mathbb{R}$ by
\begin{align}
v_{k+1}^i(s') :=&\; \Expected\left[\sum_{l=k+1}^{\infty}(\gamma^i)^{l-(k+1)}r^i(s_l,a_l)\; \big|\; s_{k+1}=s'\right],\label{eq:v}
\end{align}
which can also be written as
\be
v_{k+1}^i(s') = \Expected\left[Q_{k+1}^i(s',a)\right].
\ee
In \eqref{eq:Q} and \eqref{eq:v}, the expectations are taken with respect to the randomness on the future state and action profiles $(s_l,a_l)_{l>k}$. However, the randomness on $(s_l,a_l)_{l>k}$ is \textit{ambiguous} at stage $k$ by depending on how players would take (and randomize) their actions in the future stages $l>k$. On the other hand, there is also another ambiguity about which immediate stage-payoff a player would get and how state changes since both of them depend on the action profile of all players and the players act simultaneously. 

In the next section, we discuss how players can resolve these uncertainties through a simple learning model in which they form beliefs about these uncertainties while acting greedily and independently.

\section{Stochastic Fictitious Play}\label{sec:play}

In this section, we will first take a detour to describe stochastic fictitious play for the repeated play of a strategic-form game. Then, I will present a new variant extending it to stochastic games, based on the auxiliary stage-game framework.

\subsection{Stochastic Fictitious Play for Repeated Games}\label{sec:sfp}

Consider that a strategic-form game $G=\langle A, r\rangle$ played \textit{repeatedly} over discrete time $k=0,1,\ldots$. At stage $k$, players take actions simultaneously and receive payoffs associated with the current action profile. 
Each player's objective is to maximize her utility rather than equilibrium play. However, this objective is not a well-defined optimization problem by depending on other players' actions. Correspondingly, each player can reason about how others would play in the current repetition of the game based on how they played in the past. 

Fictitious play is a well-studied best-response type learning dynamics in game theory. Players form a belief about the mixed strategy of each opponent based on the history of the play under a possibly erroneous assumption that the opponent is playing according to a stationary strategy and they act myopically by taking the best response against the belief formed \citep{ref:Fudenberg98}. For example, Player $i$ can form a belief $\hat{\pi}_k^j \in \Delta(A^j)$ about Player $j$'s strategy based on the empirical average of the actions taken by Player $j$ until stage $k$, which can be written as
\be\label{eq:emp}
\hat{\pi}_k^j = \frac{1}{k} \sum_{l=0}^{k-1} a_k^j,\quad \forall k>0,
\ee 
where we consider the action taken $a_k^j$ as a pure (or degenerate mixed) strategy, i.e., $a_k^j\in \Delta(A^j)$.

\textit{Stochastic (also known as smoothed) fictitious play} is a variant of the classical fictitious play in which each player still forms a belief as in \eqref{eq:emp} but randomize her response as if her expected utility is given by
$
\Expected[r^i(a)] + \tau \eta^i(\pi^i),
$ 
where $\tau>0$ is a temperature parameter controlling the scale of the perturbation, and $\eta^i:\Delta(A^i)\rightarrow\mathbb{R}$ is a smooth and strictly concave function whose gradient is unbounded at the boundary of the simplex $\Delta(A^i)$ \citep{ref:Fudenberg98,ref:Hofbauer02, ref:Hofbauer05}. This perturbation penalizes strategies assigning small probabilities to some actions. Furthermore, it leads to a unique (smoothed) best response given by 
\be\label{eq:smoothed}
B^i(\pi^{-i}) = \argmax_{\mu^i\in \Delta(A^i)} \left\{\Expected_{(a^i,a^{-i})\sim (\mu^i,\pi^{-i})} [r^i(a^i,a^{-i})] + \tau \eta^i(\mu^i)\right\},
\ee
where $\pi^{-i} = (\pi^j)_{j\neq i}$.
For example, the entropy as a smooth perturbation, 
$$
\eta^i(\mu^i) = - \sum_{a^i} \mu^i(a^i) \log(\mu^i(a^i)),
$$ 
leads to the logit choice function \citep{ref:Hofbauer05}. 

The discrete-time stochastic fictitious play dynamics for the beliefs formed can be written as
\be
\hat{\pi}_{k+1}^j = \hat{\pi}_k^j + \alpha_k(a_k^j - \hat{\pi}_k^j),\quad\forall  k>0,
\ee
and $\hat{\pi}_1^j = a_1^j$ for $j\in[n] $, where $a_k^j \sim B^j(\hat{\pi}_k^{-j})$ and $\alpha_k\in [0,1]$ is a vanishing step size so that players can give more (or less) weight on their belief than the observed action compared to the empirical average. For example, $\alpha_k = 1/(k+1)$ leads to the empirical average \eqref{eq:emp}.

\begin{definition}
We say that a strategy profile $\pi_*$ is a \textit{Nash distribution} provided that
\be\label{eq:NE}
\pi_*^i = B^i(\pi_*^{-i}),
\ee 
for all $i\in [n]$.
\end{definition}

In other words, any player does not have any incentive to change her strategy unilaterally with respect to her smoothly perturbed utility \citep{ref:Fudenberg98}. Note also that $\pi_*$ is an $\epsilon$-Nash equilibrium where $\epsilon \rightarrow 0$ as $\tau \rightarrow 0$ \citep{ref:Hofbauer05}.

\subsection{Stochastic Fictitious Play for Stochastic Games}

Players are playing an auxiliary stage-game specific to a state when the associated state is visited. However, the payoff of these stage-games, i.e., the $Q$-function, is not necessarily stationary different from the repeated game framework. Furthermore, the payoff is ambiguous by depending on players' strategies in future stages. Players can attempt to resolve this ambiguity by forming a belief about the $Q$-function as they do for the opponent strategy in the fictitious play. For example, players can form a belief about the $Q$-function by assuming that each opponent plays according to a behavioral strategy while the player herself always plays according to her smoothed best response (based on her beliefs). 

However, how players would act in an auxiliary stage-game depends on when the associated state is visited since the behavioral strategy depends on the history of the play. Therefore, players may not be able to form beliefs that can be revised and improved with new observations, e.g., by taking their empirical average as in \eqref{eq:emp}. To address this ambiguity, I present a new learning scheme in which players play finite-horizon versions of the underlying stochastic games repeatedly within the underlying game while the length of the horizon grows in time. This framework lets players revise and improve their beliefs with new observations made. 

To this end, we partition the infinite horizon into epochs indexed by $t=1,2,\ldots$. Epoch $t$ consists of $t$ substages indexed by $\ell=1,\ldots,t$, and therefore, corresponds to a finite-horizon version of the underlying stochastic game that can be also be characterized by the tuple $\langle S,A,r,p,\gamma\rangle$. However, given the strategy profile $\pi^{(t)} = \{\pi^{(t)}_{\ell}\}_{\ell=1}^t$, the utility of Player $i$ associated with epoch $t$ is now given by
\be
U^{(t),i}(\pi^{(t)}) :=\Expected \left[\sum_{\ell = 1}^{t} (\gamma^i)^{\ell-1}r^i(s_{\ell}^{(t)},a_{\ell}^{(t)})\right], 
\ee
where $(s_{\ell}^{(t)},a_{\ell}^{(t)})$ corresponds to the state and action profile realized at substage $\ell$ of epoch $t$ and the expectation is taken with respect to the randomness on the state and action profile.
Correspondingly, at substage $\ell$ of epoch $t$, we can view players as playing a strategic-form \textit{auxiliary substage-game} specific to the current state $s$ and substage $\ell$. This substage-game can be characterized by the pair $G_{\ell}^{(t)}(s):=\langle A, Q_{\ell}^{(t)}(s,\cdot)\rangle$, in which the payoff function is given by
\begin{align}
Q_{\ell}^{(t),i}(s,a) &= \Expected\left[\sum_{l=\ell}^{t} (\gamma^i)^{l-\ell} r^i(s_{l}^{(t)},a_{l}^{(t)}) \;|\; (s_{\ell}^{(t)}, a_{\ell}^{(t)}) = (s,a) \right],\nn\\
&= r^i(s_{\ell}^{(t)},a_{\ell}^{(t)}) + \gamma^i \sum_{s'\in S} p(s'|s,a) v_{\ell +1}^{(t),i}(s'),\quad \forall a,\label{eq:Qsub}
\end{align}
for all $\ell=1,\ldots,t-1$, and the continuation payoff can be written as
\be
v_{\ell+1}^{(t),i}(s') = \Expected\left[ Q_{\ell+1}^{(t),i}(s',a) \right],\label{eq:vsub}
\ee
and $Q_{t}^{(t),i}(s,a)=r^i(s,a)$.

\begin{figure}[t!]
\centering
\includegraphics[width=.55\textwidth]{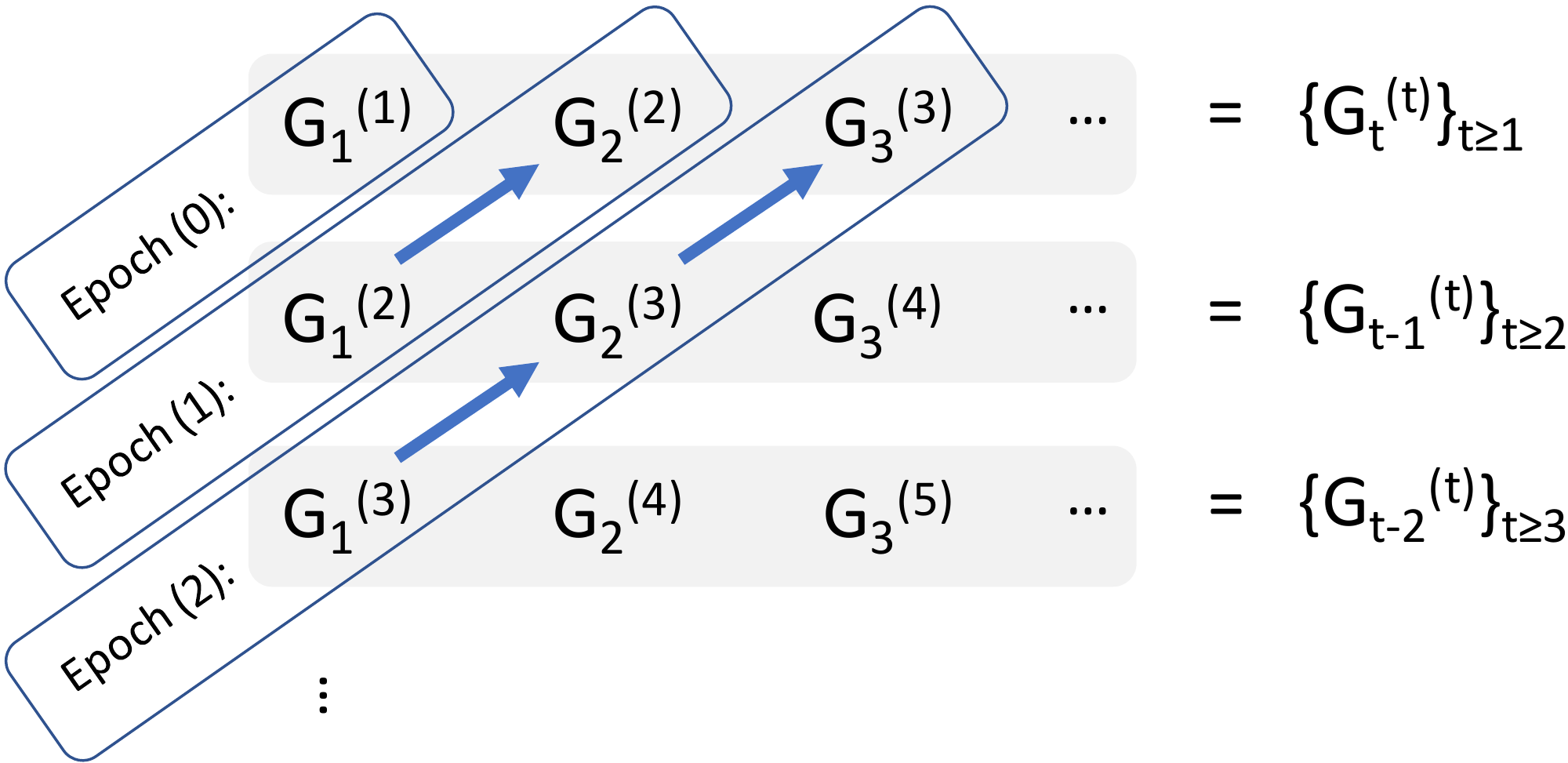}
\caption{A figurative illustration for the repeated play of auxiliary substage-games $G_{\ell}^{(t)}$ at epoch $t$ and substage $\ell$. For example, $\{G_{t-m}^{(t)}\}_{t>0}$ corresponds to the substage-games that are $m$ substage before the last one within the associated epoch.  Arrows represent the order these substage-games played in an epoch.}\label{fig:model}
\end{figure}

Through this learning scheme, players can now revise and improve the beliefs formed about the payoff of these substage-games across epochs because each $Q_{t-m}^{(t),i}$, for $t\geq m$, corresponds to the payoff of the substage-game that is $m$ substage before the last one in each epoch $t$, as illustrated in Fig. \ref{fig:model}. For example, $Q_{t}^{(t),i}$ corresponds to the last one. Hence, as in the classical fictitious play, the players can assume that opponents play according to stationary strategies in the repetitions of the substage-game $G_{t-m}^{(t)}(s)$ for $t\geq m$. 

Player $i$ forms beliefs $\beliefPi_{\ell}^{(t),j}(s)$, for each $j\neq i$, and $\beliefQ_{\ell}^{(t),i}(s,\cdot)$ that, respectively, correspond to Player $j$'s strategy, for $j\neq i$, and the payoff at the auxiliary substage-game $G_{\ell}^{(t)}(s)$. Across the repetitions of the associated substage-game $G_{t-m}^{(t)}(s)$, for $t\geq m$, players can revise and improve their beliefs based on new observations made, e.g., by taking the empirical average of actions as in \eqref{eq:emp}. Player $i$ also keeps track of $\beliefV_{\ell}^{(t)}(s)$ corresponding to the expected payoff Player $i$ gets in the auxiliary substage-game $G_{\ell}^{(t)}(s)$, as described in \eqref{eq:vsub}, where $v_+^i(s')$ corresponds to the continuation payoff for state $s'$ starting from the next (sub)stage.

\begin{table}[t]
\normalsize
\caption{The $k$th Iteration of Stochastic Fictitious Play Update}\label{algo:SFP}
\begin{algorithmic}
\hrule
\Procedure{$\mathrm{SFP}_k$}{$Q^i,\pi^{-i}$}
\State $B^i \gets \argmax_{\mu^i}\left\{\Expected_{(a^i,a^{-i})\sim (\mu^i,\pi^{-i})}[Q^i(a)] + \tau \eta^i(\mu^i)\right\}$
\State Play action $a^i_*\sim B^i$
\State $a^j_* \gets$ action of Player $j$, for all $j\neq i$ 
\State $\pi^{j}\gets \pi^{j} + \alpha_k (a^j_* - \pi^j)$, for all $j\neq i$
\State $v^i \gets \Expected_{(a^i,a^{-i})\sim (B^i,\pi^{-i})}[Q^i(a)]$
\State\Return $(\pi^{-i},v^i)$
\EndProcedure
\hrule
\end{algorithmic}
\end{table}

\begin{table}[t]
\normalsize
\caption{The $k$th Iteration of Q-function Update for State $s$}\label{algo:Q}
\begin{algorithmic}
\hrule
\Procedure{$\mathrm{QUpdate}_k$}{$s,Q^i(s,\cdot),v_+^i$}
\State$Q^i(s,\cdot) \gets Q^i(s,\cdot) + \beta_k \left(r^i(s,\cdot) + \gamma^i \sum_{s'}p(s'|s,\cdot) v_+^i(s') - Q^i(s,\cdot)\right)$
\State\Return $Q^i(s,\cdot)$
\EndProcedure
\hrule
\end{algorithmic}
\end{table}

\begin{table}[t]
\caption{Stochastic Fictitious Play for Stochastic Games}
\normalsize
\label{algo}
\hrule
\begin{algorithmic}[1]
\For{Each epoch $t=0,1,\ldots$}
\State $\beliefPi_1^{(t),j}(s,a) \gets 1/|A^j|$, for all $(s,a)$, $j\neq i$
\State $\beliefQ_1^{(t),i}(s,a) \gets r^i(s,a)$, for all $(s,a)$
\For{Each substage $\ell=1,2,\ldots,t$}
\If{$s$ is the current state}
\State $(\beliefPi_{\ell+1}^{(t+1),-i}(s),\beliefV_{\ell+1}^{(t+1),i}(s))\gets \mathrm{SFP}_{c_{\ell}^{(t)}(s)}(\beliefQ_{\ell}^{(t),i}(s,\cdot),\beliefPi_{\ell}^{(t),-i}(s))$
\If{$\ell<t$}
\State 
$\beliefQ_{{\ell}+1}^{(t+1),i}(s,\cdot)\gets \mathrm{QUpdate}_{c_{\ell}^{(t)}(s)} (s,\beliefQ_{\ell}^{(t),i}(s,\cdot),\beliefV_{\ell+1}^{(t),i})$
\Else
\State $\beliefQ_{\ell+1}^{(t+1),i}(s,\cdot) \gets r^i(s,a)$\EndIf
\State $c_{\ell+1}^{(t+1)}(s)\gets c_{\ell}^{(t)}(s)+1$
\Else
\State $(\beliefPi_{\ell+1}^{(t+1),-i}(s),\beliefV_{\ell+1}^{(t+1),i}(s))\gets (\beliefPi_{\ell}^{(t),-i}(s),\beliefV_{\ell}^{(t),i}(s))$
\State $\beliefQ_{\ell+1}^{(t+1),i}(s,\cdot) \gets \beliefQ_{\ell}^{(t),i}(s,\cdot)$
\State $c_{\ell+1}^{(t+1)}(s)\gets c_{\ell}^{(t)}(s)$
\EndIf
\EndFor
\EndFor
\end{algorithmic} 
\hrule
\end{table}

Player $i$ follows the stochastic fictitious play update across the repetitions of each auxiliary substage-game $G_{t-m}^{(t)}(s)$, for $t\geq m$ (if the associated state is visited) based on the beliefs $\beliefPi_{t-m}^{(t),-i}(s)$ and $\beliefQ_{t-m}^{(t),i}(s,\cdot)$, as described in Table \ref{algo:SFP} with a step size $\alpha_k\in[0,1]$.
They can also form beliefs about the payoff functions of these substage-games, i.e., $Q_{t-m}^{(t),i}(s,\cdot)$, and update these beliefs by taking a convex combination of the current belief and the realized one based on \eqref{eq:Qsub}, as described in Table \ref{algo:Q} with a step size $\beta_k\in[0,1]$ (which can be different from $\alpha_k$). The stochastic fictitious play for stochastic games combines the stochastic fictitious play update, described in Table \ref{algo:SFP}, and the $Q$-function update, described in Table \ref{algo:Q}, together by keeping track of $c_{\ell}^{(t)}(s)$ corresponding to the number of times the associated substage game $G_{t-m}^{(t)}(s)$ for $t\geq m$ has been played. Table \ref{algo} is a description of the dynamics from the perspective of Player $i$.

\section{Convergence Results}\label{sec:main}

In this section, I present the global convergence results for the new dynamics, described in Table \ref{algo}. To this end, we first introduce strategic equivalence of strategic-form games.

\begin{definition}\label{def:SE}
We say that two strategic-form games $G=\langle A,U \rangle$ and $\tilde{G} = \langle A,\tilde{U}\rangle$ (with the same action sets) are \textit{strategically equivalent} provided that there exists a positive constant $h^i>0$ and a function $g^i:A^{-i}\rightarrow\mathbb{R}$ such that
\be
\tilde{U}^i(a^i,a^{-i}) =  h^i \cdot U^i(a^i,a^{-i}) + g^i(a^{-i}),
\ee
for all $i\in[n]$ and $a\in A$.
\end{definition}

We focus on characterizing the convergence properties of the discrete-time dynamics based on the stochastic approximation theory through its limiting ordinary differential equation (o.d.e.) via a Lyapunov function formulation. For example, the evolution of beliefs formed about the opponent strategy in the repeated play of $G=\langle A,r\rangle$ is given by
\be\label{eq:discrete}
\beliefPi_{k+1}^{j} = \beliefPi_k^{j} + \alpha_k (a_k^j - \beliefPi_k^j),\quad\mbox{where } a_k^j \sim B^j(\beliefPi_k^{-j})
\ee
for all $j$ and the smoothed best response $B^j(\cdot)$ is as described in \eqref{eq:smoothed}. Correspondingly, if the step size satisfies the standard assumptions in the stochastic approximation methods, provided explicitly in Assumption \ref{assume:step} later, the Lipschitzness of the smoothed best response function yields that its limiting ordinary differential equation is given by
\be\label{eq:ode}
\frac{d\pi^j}{dt} = B^j(\pi^{-j}) - \pi^j,
\ee 
for all $j$ and $\pi^j:[0,\infty)\rightarrow \Delta(A^j)$ is a function of time \citep{ref:Hofbauer02}. The reader can refer to \citep{ref:Borkar08} for further information on stochastic approximation theory.

Note that the smoothed best response is a continuous function of the payoff functions though this dependence is implicit in its definition \eqref{eq:smoothed}. Therefore, the discrete-time update \eqref{eq:discrete} would have the same limiting o.d.e. \eqref{eq:ode} if players played a sequence of strategic-form games $G_k=\langle A,r_k \rangle$ such that $r_k(a)\rightarrow r(a)$ as $k\rightarrow\infty$ almost surely for all action profile $a$, i.e., $G_k\rightarrow G$. Particularly, we can view the difference between smoothed best responses in $G_k$ and $G$ as an error and this error term is asymptotically negligible since they are continuous functions of the payoffs and $G_k\rightarrow G$ as $k\rightarrow \infty$.

Stochastic fictitious play is known to converge equilibrium in two-player zero-sum games and $n$-player identical-interest games \citep{ref:Hofbauer05}. Particularly, given a strategic-form game $G=\langle A,U\rangle$, we say that $G$ is
\begin{itemize}
\item a zero-sum game if $\sum_{i\in[n]} U^i(a) = 0$ for all $a$,
\item an identical-interest game if $U^i(a)=U^{j}(a)$ for all $i,j$, and $a$.
\end{itemize}
Sufficiently smooth perturbation (i.e., sufficiently large $\tau>0$) plays an important role in characterizing the convergence properties of the discrete-time dynamics via a Lyapunov function formulation on its limiting ordinary differential equation, e.g., to ensure that the game with perturbed payoffs has isolated equilibria in identical-interest games (e.g., see \citep[Theorem 6]{ref:Hofbauer02}).

Consider a directed graph $\mathcal{G}$ whose nodes correspond to states and a directed edge exists in-between nodes associated with states $s$ and $s'$ if and only if $p(s'|s,a)>0$ for some action profile $a$. We make the following assumption about the transition probabilities across states similar to the irreducibility of Markov chains.

\begin{assumption}\label{assume:MC}
The graph $\mathcal{G}$ is connected, i.e., there exists a path in-between any two nodes.
\end{assumption}

The assumption that every state gets visited infinitely often is used, e.g., in \citep{ref:Sayin20,ref:Leslie20}, which is also implied by Assumption \ref{assume:MC} due to the smoothed best response of players ensuring that each action profile gets played with some positive probability. Furthermore, we can check Assumption \ref{assume:MC} independent of the play based on the state transition probabilities only. 

We also make the following assumption about the step sizes $\alpha_k$ and $\beta_k$ used in the updates, described in Tables \ref{algo:SFP} and \ref{algo:Q}.  

\begin{assumption}\label{assume:step}
The step sizes $\{\alpha_k\in[0,1]\}_{k\geq 0}$ and $\{\beta_k\in[0,1]\}_{k\geq 0}$ satisfy the standard conditions: $\alpha_k\rightarrow 0$ and $\beta_k\rightarrow 0$ as $k\rightarrow 0$, $\sum_{k=0}^{\infty}\alpha_k = \infty$ and $\sum_{k=0}^{\infty} \beta_k = \infty$, and $\sum_{k=0}^{\infty}\alpha_k^2 < \infty$.
\end{assumption}

We do not assume that $\sum_{k=0}^{\infty}\beta_k^2 < \infty$ since the update rule, described in Table \ref{algo:Q}, does not include a stochastic approximation error in the model-based case where state transition probabilities are known by the players. A generalization to the model-free case is discussed later in Section \ref{sec:extension}.

The following theorem provides a global convergence guarantee for the dynamics presented.

\begin{theorem}\label{thm:main}
Consider an $n$-player stochastic game, characterized by $\langle S,A,r,p,\gamma\rangle$, with turn-based controllers. Suppose that stage-payoffs induces a zero-sum or identical-interest game. Furthermore, let each player follow the learning dynamics described in Table \ref{algo}. 
Then, under Assumptions \ref{assume:MC} and \ref{assume:step}, for any $\ell=0,1,\ldots$ and $m=0,\ldots,\ell$, we have
\be\nn
(\beliefPi_{t-m}^{(t),i})_{i\in [n]} \rightarrow (\pi_m^i)_{i\in [n]},\quad\mbox{and}\quad
(\beliefQ_{t-m}^{(t),i})_{i\in [n]} \rightarrow (Q_m^i)_{i\in [n]},
\ee
as $t\rightarrow \infty$ almost surely, where $\{\pi_k^1,\ldots,\pi_k^n\}_{k=0}^{\ell}$ and $\{Q_k^1,\ldots,Q_k^n\}_{k=0}^{\ell}$ are, respectively, a Nash distribution and the corresponding $Q$-functions of a finite-horizon version of the underlying stochastic game with $\ell+1$ stages. Particularly, for each $k=0,\ldots,\ell$, the strategy profile $\pi_k=(\pi_k^i(s))_{i\in [n],s\in S}$ satisfies the condition
\be\nn
\pi^i_k(s) = \argmax_{\mu^i\in \Delta(A^i)}\left\{\Expected_{(a^i,a^{-i})\sim (\mu^i, \pi^{-i}_k(s))}[Q^i_k(s,a)] + \tau \eta^i(\mu^i)\right\}
\ee
for all $i,s$, and the associated $Q$-functions $\{Q_k\}_{k=0}^{\ell}$ are as descibed in \eqref{eq:Qsub}.\footnote{The expectation in \eqref{eq:Qsub} is now taken with respect to the randomness induced from the state transitions and the strategy profile $\{\pi_k\}_{k=0}^{\ell}$.}
\end{theorem}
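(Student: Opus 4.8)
The plan is to argue by induction on $m$, the index that counts the auxiliary substage-games from the \emph{end} of an epoch backwards, i.e.\ position $m$ is $G_{t-m}^{(t)}$. The inductive claim is that, as $t\to\infty$ and almost surely, the strategy-belief profile at position $m$ converges to $\pi_m$, the $Q$-belief to $Q_m$, and the value belief $\beliefV$ to $v_m$, where $(\pi_m,Q_m,v_m)_{m\geq 0}$ are defined by the backward recursion \eqref{eq:Qsub}--\eqref{eq:vsub} together with the per-state fixed-point condition in the theorem. Two preliminaries set the stage. First, Assumption \ref{assume:MC} and the full support of the smoothed best response \eqref{eq:smoothed} imply, as in \citep{ref:Sayin20,ref:Leslie20}, that every state is visited infinitely often, so for each fixed $m$ and $s$ the clock governing position-$m$ updates at $s$, namely $c_{t-m}^{(t)}(s)$, tends to infinity with $t$, and it increments by one per visit; thus the recursions in Tables \ref{algo:SFP} and \ref{algo:Q} are genuine asynchronous stochastic-approximation iterations with step sizes satisfying Assumption \ref{assume:step}, and by the o.d.e.\ method \citep{ref:Borkar08} it suffices to analyze the limiting o.d.e. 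Second, and crucially, I would establish the equivalence property described next.

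\emph{Equivalence property.} For any bounded $w^i:S\to\mathbb{R}$, the stage-game with payoff profile $\widetilde Q^i(s,a)=r^i(s,a)+\gamma^i\sum_{s'}p(s'|s,a)w^i(s')$ is strategically equivalent, in the sense of Definition \ref{def:SE}, to a zero-sum game (resp.\ an identical-interest game) whenever $\langle A,r(s,\cdot)\rangle$ is zero-sum (resp.\ identical-interest). The proof uses Definition \ref{def:turnbased}: because $p(s'|s,a)=p(s'|s,a^{i_s})$, the continuation term $\gamma^i\sum_{s'}p(s'|s,a^{i_s})w^i(s')$ depends only on the controller's action $a^{i_s}$. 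For every non-controller $j\neq i_s$ this term is a function of $a^{-j}$ and is absorbed into the $g^j(\cdot)$ of Definition \ref{def:SE} with multiplier $h^j=1$, leaving only the controller's own summand. In the identical-interest case one takes the common reference payoff $r(s,\cdot)+\gamma^{i_s}\sum_{s'}p(s'|s,a^{i_s})w^{i_s}(s')$, using $r^i\equiv r^{i_s}$ to cancel the stage-payoff differences; in the zero-sum case one keeps $\widetilde Q(s,\cdot)$ as the reference and subtracts the single scalar function $\sum_i \gamma^i\sum_{s'}p(s'|s,a^{i_s})w^i(s')$ of $a^{i_s}$ from the payoff of one fixed non-controller, which restores $\sum_i U^i\equiv 0$. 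A short induction on the $Q$-update then shows that every $\beliefQ_{t-m}^{(t),i}(s,\cdot)$ arising in Table \ref{algo} has the form $r^i(s,\cdot)+\gamma^i\sum_{s'}p(s'|s,\cdot)w^i(s')$ (it is initialized at $r^i$, i.e.\ $w^i\equiv 0$, and a convex combination of such forms is again of this form), so \emph{every} auxiliary substage-game that ever arises is strategically equivalent to a zero-sum or identical-interest game.

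For the base case $m=0$, the last substage-game of each epoch has its $Q$-belief reset to $r^i(s,\cdot)$, so position-$0$ play is exactly stochastic fictitious play for the \emph{fixed} game $\langle A,r(s,\cdot)\rangle$; the Lyapunov analysis of \eqref{eq:ode} in \citep{ref:Hofbauer02,ref:Hofbauer05} (taking $\tau$ large enough in the identical-interest case so the perturbed equilibrium is isolated --- uniformly in $m$, since the $Q_m^i$ are bounded by $\max_i\|r^i\|_\infty/(1-\gamma^i)$) gives convergence of the position-$0$ strategy beliefs to a Nash distribution $\pi_0$, and continuity of the smoothed best response gives $\beliefV\to v_0^i(s)=\Expected_{\pi_0(s)}[r^i(s,\cdot)]$; here $Q_0^i=r^i$ is the boundary value in \eqref{eq:Qsub}. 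For the inductive step, assume the three convergences at every position $m'<m$. The $\beliefQ$-update at position $m$ (Table \ref{algo:Q}) is a deterministic recursion with target $r^i(s,\cdot)+\gamma^i\sum_{s'}p(s'|s,\cdot)\beliefV^i(s')$ driven by position-$(m-1)$ value beliefs; by the inductive hypothesis this target converges to $Q_m^i(s,\cdot):=r^i(s,\cdot)+\gamma^i\sum_{s'}p(s'|s,\cdot)v_{m-1}^i(s')$, and --- there being no martingale noise in the model-based case, which is why $\sum_k\beta_k^2<\infty$ is not needed --- a standard vanishing-drift argument gives $\beliefQ\to Q_m^i(s,\cdot)$, matching \eqref{eq:Qsub}--\eqref{eq:vsub}. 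The $\beliefPi$-update at position $m$ is stochastic fictitious play on the \emph{sequence} of stage-games with payoff profiles $\beliefQ_{t-m}^{(t)}(s,\cdot)$, each strategically equivalent to a zero-sum or identical-interest game by the equivalence property and converging to $\langle A,Q_m(s,\cdot)\rangle$, which is likewise so equivalent; since the smoothed best response is Lipschitz in the payoffs, the difference between these payoffs and their limit is an asymptotically negligible perturbation, so the limiting o.d.e.\ is that of stochastic fictitious play on $\langle A,Q_m(s,\cdot)\rangle$, and --- the additive $g^i(a^{-i})$ of Definition \ref{def:SE} dropping out of the $\argmax$ in \eqref{eq:smoothed} and the multiplier being $1$ --- the zero-sum/identical-interest Lyapunov argument applies verbatim, giving convergence of the position-$m$ strategy beliefs to $\pi_m$ with $\pi_m^i(s)=B^i(\pi_m^{-i}(s))$, which is the stated fixed-point condition. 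Finally $\beliefV=\Expected_{(B^i(\beliefPi^{-i}),\beliefPi^{-i})}[\beliefQ^i(s,\cdot)]$ (Table \ref{algo:SFP}) converges, by continuity, to $\Expected_{\pi_m(s)}[Q_m^i(s,\cdot)]=v_m^i(s)$, closing the induction. The final statement is then immediate for any $\ell$ and $m\leq\ell$: $\{Q_k\}_{k\leq\ell}$ satisfies \eqref{eq:Qsub}--\eqref{eq:vsub} with $Q_0=r$, and each $\pi_k(s)$ is, by construction, the smoothed best response to itself under $Q_k(s,\cdot)$ --- exactly the Nash-distribution condition for the $(\ell+1)$-stage finite-horizon game.

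I expect the main obstacle to be the equivalence property, and specifically the controller's own continuation term: since $p(s'|s,a^{i_s})$ genuinely depends on $a^{i_s}$, it cannot be removed by an additive function of $a^{-i_s}$, so one must fold it into a \emph{redefined} zero-sum or identical reference game without breaking the structure, which in the zero-sum, $n$-player case needs the careful placement of the surplus term above. The remaining issues are bookkeeping: that the per-(position, state) clocks diverge across epochs of growing length, that the value beliefs feeding the position-$m$ $Q$-update converge to $v_{m-1}$ despite lagging across epochs, and that a single $\tau$ serves all positions --- all of which fit the standard stochastic-approximation-with-vanishing-perturbation template once boundedness of the $Q_m$ is noted.
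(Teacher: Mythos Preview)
Your proposal is correct and follows essentially the same approach as the paper: backward induction on $m$, using the turn-based-controller structure to show that each limiting auxiliary substage-game $\langle A,Q_m(s,\cdot)\rangle$ is strategically equivalent (in the sense of Definition~\ref{def:SE}) to a zero-sum or identical-interest game, and then invoking the known convergence of stochastic fictitious play in those classes together with the asymptotically-negligible-perturbation argument for the o.d.e.\ limit. Your additional observation that \emph{every} intermediate $\beliefQ_{t-m}^{(t),i}(s,\cdot)$---not just the limits---has the form $r^i(s,\cdot)+\gamma^i\sum_{s'}p(s'|s,\cdot)w^i(s')$ and is therefore already strategically equivalent to the desired class is a mild strengthening the paper does not state explicitly (it works only with the limit game), but it is correct and changes nothing in the argument.
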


\begin{proof}
The proof follows from backward induction based on the observation that the last substage-games at each epoch are stationary, i.e., have the same payoff functions across epochs, and therefore, their learning dynamics are uncoupled from the dynamics at other substage games. 
To this end, we can show that the dynamics for the last substage-game converges to a Nash distribution as it is either a zero-sum or identical-interest game. Then we can show that auxiliary substage games that are one substage before the last one converge to a game that is strategically equivalent to either a zero-sum or identical-interest game by Definitions \ref{def:turnbased} and \ref{def:SE}. Therefore, the dynamics there would also converge to a Nash distribution. We can conclude the convergence of the dynamics at any substage game by induction.  

The payoff functions of the last substage-games are stationary and equal to the stage-payoffs. Correspondingly, they are either a zero-sum or an identical-interest game based on the statement of Theorem \ref{thm:main}. Therefore, stochastic fictitious play dynamics can converge to a Nash distribution since Assumption \ref{assume:MC} yields that each auxiliary substage-game gets played infinitely often. 

Particularly, the nodes of the graph $\mathcal{G}$ (constructed based on the state transition probabilities) can form disjoint subsets $N_1,\ldots,N_d$ for some $d>1$ such that there are edges only from $N_k$ to $N_{k+1\pmod d}$ leading to a periodicity in the chain of states. However, even in such a case, each auxiliary substage game $\{G_{t-m}^{(t)}(s)\}_{t\geq m}$ specific to $m=0,1,\ldots$ and $s\in S$ would still be played infinitely often since the number of substages in epoch $t$ is $t$, i.e., it increases one by one in time. 

Since the last substage-games $G_{t}^{(t)}(s) = \langle A, r(s)\rangle$ is either zero-sum or identical-interest game and get played infinitely often, we have
\be\label{eq:pilast}
(\beliefPi_t^{(t),i}(s))_{i\in[n]}\rightarrow (\pi^i_{\ell}(s))_{i\in[n]},
\ee
as $t\rightarrow \infty$ almost surely, where $(\pi_{\ell}^i(s))_{i\in[n]}$ is a Nash distribution of the last substage-game $\langle A, r(s)\rangle$ in a finite horizon version of the underlying stochastic game with $\ell+1$ stages and $\ell=0,1,\ldots$ is arbitrary. This yields that
\be\label{eq:vlast}
\beliefV_{t}^{(t),i}(s) \rightarrow \Expected_{a\sim \pi_{\ell}(s)}[Q_{\ell}^i(s,a)],
\ee
as $t\rightarrow\infty$ almost surely for each $s\in S$, since we have $Q_{\ell}^i(s,a)=r^i(s,a)$ for all $(s,a)$. 

Suppose that for each $s\in S$ and some $m=1,\ldots,\ell$, we have
\be\label{eq:vm}
\beliefV_{t-m+1}^{(t),i}(s) \rightarrow \Expected_{a\sim \pi_{\ell-m+1}(s)}[Q_{\ell-m+1}^i(s,a)]
\ee
as $t\rightarrow\infty$ almost surely. Then, the payoff functions of the substage-games that are $m$ substages before the last ones, i.e., $G_{t-m}^{(t)}(s)=\langle A,\beliefQ_{t-m}^{(t)}(s)\rangle$, evolve according to
\begin{align}
\beliefQ_{t-m+1}^{(t+1)}&(s,a) = \left(1-\beta_{c_{t-m}^{(t)}(s)}\right) \beliefQ_{t-m}^{(t)}(s,a) 
+\beta_{c_{t-m}^{(t)}(s)}\left(r^i(s,a) + \gamma^i \sum_{s'} p(s'|s,a) \beliefV_{t-m+1}^{(t),i}(s')\right)\label{eq:Qv}
\end{align}
when $s$ gets visited at substage $t-m$ of epoch $t$. Assumption \ref{assume:MC} ensures that the update \eqref{eq:Qv} takes place infinitely often. Assumption \ref{assume:step} ensures that the step size $\beta_k$ decays sufficiently slowly such that $\beliefQ_{t-m}^{(t),i}(s,a)$ can converge to the limit of $r^i(s,a) + \gamma^i \sum_{s'} p(s'|s,a) \beliefV_{t-m+1}^{(t),i}(s')$, whose existence follows from \eqref{eq:vm}. We denote its limit by
\begin{align}
Q_{\ell-m}^i(s,a) &:= r^i(s,a)
+ \gamma^i \sum_{s'} p(s'|s,a) \Expected_{a'\sim \pi_{\ell-m+1}(s')}[Q_{\ell-m+1}^i(s',a')].\nn
\end{align}
Then, we have
\be\label{eq:Qm}
\beliefQ_{t-m}^{(t),i}(s,a) \rightarrow Q_{\ell-m}^i(s,a),
\ee
as $t\rightarrow\infty$ almost surely for each $s\in S$.
Furthermore, \eqref{eq:Qm} implies that $G_{t-m}^{(t)}(s)\rightarrow \langle A, Q_{\ell-m}^i(s,\cdot)\rangle$ as $t\rightarrow\infty$ almost surely.

Since $p(s'|s,a) = p(s'|s,a^{i_s})$ where Player $i_s$ is the only player controlling the state transitions at state $s$, as described in Definition \ref{def:turnbased}, we can write $Q_{\ell-m}^i(s,a)$ as
\begin{align}
Q_{\ell-m}^i(s,a)=r^i(s,a^{i_s},a^{-i_s}) + g^i_{\ell-m}(s,a^{i_s}),\label{eq:hg}
\end{align}
for some function $g^i_{\ell-m}(\cdot)$ that depends only on the action of Player $i_s$. Hence, the game $\langle A,Q_{\ell-m}^i(s,\cdot)\rangle$ is strategically equivalent to $\langle A, \tilde{Q}_{\ell-m}^i(s,\cdot)\rangle$ whose payoffs are defined based on the structure of the game induced by the stage payoffs $\{r^i(s,\cdot)\}_{i\in [n]}$.
\begin{itemize}
\item If $\langle A, r^i(s,\cdot)\rangle$ is a two-player zero-sum game, i.e., $r^1(s,a)+r^2(s,a) = 0$ for all $a\in A$, then we define
\begin{subequations}
\begin{align}
&\tilde{Q}_{\ell-m}^{i_s}(s,a) := r^{i_s}(s,a) + g^{i_s}_{\ell-m}(s,a^{i_s})\\
&\tilde{Q}_{\ell-m}^{-i_s}(s,a) := r^{-i_s}(s,a) - g^{i_s}_{\ell-m}(s,a^{i_s})
\end{align}
\end{subequations}
such that $\tilde{Q}_{\ell-m}^{i_s}(s,a) + \tilde{Q}_{\ell-m}^{-i_s}(s,a)=0$ for all $a\in A$.
\item If $\langle A, r^i(s,\cdot)\rangle$ is a $n$-player identical-interest game, i.e., $r^i(s,a)=r^j(s,a)$ for all $i,j\in [n]$, then we define
\begin{subequations}
\begin{align}
&\tilde{Q}_{\ell-m}^{i_s}(s,a) := r^{i_s}(s,a) + g^{i_s}_{\ell-m}(s,a^{i_s})\\
&\tilde{Q}_{\ell-m}^{j}(s,a) := r^{j}(s,a) + g^{i_s}_{\ell-m}(s,a^{i_s})\quad\forall j\neq i_s
\end{align}
\end{subequations}
such that $\tilde{Q}_{\ell-m}^{j}(s,a) = \tilde{Q}_{\ell-m}^{i_s}(s,a)$ for all $j\neq i_s$ and $a\in A$.
\end{itemize}
Since $\langle A,Q_{\ell-m}^i(s,\cdot)\rangle$ is strategically equivalent to either a two-player zero-sum or a $n$-player identical-interest game, the stochastic fictitious play across $\{G_{t-m}^{(t)}(s)\}_{t>1}$ converges to a Nash distribution of $\langle A,Q_{\ell-m}^i(s,\cdot)\rangle$ and we obtain
\be
(\beliefPi_{t-m}^{(t),i}(s))_{i\in[n]}\rightarrow (\pi^i_{\ell-m}(s))_{i\in[n]},
\ee
as $t\rightarrow \infty$ almost surely, where $(\pi_{\ell-m}^i(s))_{i\in[n]}$ is a Nash distribution of the substage-game that is $m$ substages before the last one in the finite horizon version of the underlying stochastic game. This yields that
\be
\beliefV_{t-m}^{(t),i}(s) \rightarrow \Expected_{a\sim \pi_{\ell-m}(s)}[Q_{\ell-m}^i(s,a)],
\ee
as $t\rightarrow\infty$ almost surely for each $s\in S$. Then, the proof follows from the induction.
\end{proof}

\section{Discussion}\label{sec:extension}

In this section, we discuss several research directions to extend the framework to more general cases such as: dynamics with finite memory, model-free dynamics, payoff-based dynamics, and asynchronous dynamics.

\textbf{Dynamics with Finite Memory:} In the dynamics presented, we consider that players have perfect recall. A generalization to the cases with bounded memory can be possible if players impose a hard constraint on the number of substages in an epoch. However, this can only guarantee convergence to a neighborhood of Nash distribution while the approximation error would depend on the size of their memory. On the other hand, a new variant of the dynamics with receding horizon is an interesting future research direction.

\textbf{Model-free Dynamics:} In the dynamics, we also consider that players know their own stage-payoff functions and state transition probabilities. A generalization to model-free cases where players do not know stage-payoffs and state transition probabilities is possible through a $Q$-learning update rule (introduced by \cite{ref:Watkins92}) in place of the update rule described in Table \ref{algo:Q} similar to the framework in \citep{ref:Sayin20}. This generalization brings in a stochastic approximation error in the update of $Q$-function estimates and we can impose additionally the condition $\sum_{k=0}^{\infty}\beta_k^2 < \infty$ to guarantee its almost sure convergence.

\textbf{Payoff-based Dynamics:} The dynamics necessitate access to opponent actions. Its generalization to payoff-based dynamics can be possible if players follow payoff-based dynamics for repeated play of the same game, such as individual-$Q$ learning introduced by \cite{ref:Leslie05} and the actor-critic version of the generalized weakened fictitious play introduced in \cite[Section 5]{ref:Leslie06}, in place of stochastic fictitious play.

\textbf{Asynchronous Dynamics:} A key property of the dynamics studied is to divide the horizon into epochs with finitely many stages. However, there is an implicit assumption that players are coordinated to play a finite horizon version of the underlying stochastic games within these epochs synchronously. Another interesting future research direction is to relax this coordination and to address asynchronous dynamics letting different players choose different epoch lengths.  

\section{Conclusion}\label{sec:conclusion}

In this paper, I presented a new variant of stochastic fictitious play combined with $Q$-function version of value iteration for learning in general-sum stochastic games. The dynamics presented has global convergence guarantees in stochastic games in which stage-payoffs induce a zero-sum or identical-interest game and there is only one (possibly state-dependent) player controlling the state transitions. This includes stochastic games in which stage-payoffs can have completely different structures at different states. For example, they can be identical in some and be completely opposite in others. The dynamics presented can also have convergence guarantees for player-dependent discount factors, which can lead to a non-zero-sum stochastic game even when stage payoffs sum to zero. 

One of the key properties of the dynamics is to divide the infinite horizon into epochs with finitely many stages such that players play a finite horizon version of the underlying stochastic game repeatedly. This provides players opportunities to revise and improve their beliefs formed about the play. There are several interesting future research directions discussed in Section \ref{sec:extension}.

\end{spacing}

\begin{spacing}{1}
\bibliographystyle{plainnat}
\bibliography{mybib}
\end{spacing}

\end{document}